\newcommand{\Zp}{\ensuremath{\mathds{Z}_+}}
\newcommand{\Pclass}{\textbf{P}}
\newcommand{\NP}{\textbf{NP}}
\newcommand{\NPhard}{{\textbf{NP}-hard}}
\newcommand{\DTIME}{{\textbf{DTIME}}}
\newcommand{\ub}{\ensuremath{IUB}}
\newcommand{\ubof}[1]{\ensuremath{\ub(#1)}}
\newtheorem{theorem}{Theorem}
\newtheorem{lemma}[theorem]{Lemma}
\title{Approximation algorithms for the Transportation Problem with Market Choice and related models}
\author{Karen Aardal, Pierre Le Bodic}
\begin{document}
\maketitle
\begin{abstract}
Given facilities with capacities and clients with penalties and demands, the transportation problem with market choice consists in finding the minimum-cost way to partition the clients into unserved clients, paying the penalties, and into served clients, paying the transportation cost to serve them.

We give polynomial-time reductions from this problem and variants to the (un)capacitated facility location problem, directly yielding approximation algorithms, two with constant factors in the metric case, one with a logarithmic factor in the general case.
\end{abstract}

\section{Introduction}
In the classical transportation problem \cite{kantorovich60a, ahuja93a}, we are given a 
set $F$ of $m$ facilities and a set $C$ of $n$ clients. 
Each facility $i \in F$ has a supply capacity $s_i$ and each client $j \in C$ has a demand $d_j$ (throughout, facilities will always be denoted $i$ and clients $j$, sometimes with a subscript).
The per unit cost of transporting items from facility $i$ to client $j$ is $c_{ij}$.
The transportation problem consists in finding a minimum-cost flow so that each client's demand is met, and such that no supply is exceeded.
Since this problem can be modeled as a special case of the minimum-cost flow problem \cite{ahuja93a}, it can be solved in polynomial time.

The Capacitated Facility Location (CFL) problem \cite{kuehn63a,ahuja93a} is an \NPhard{} generalization of the transportation problem where each facility $i$ has an opening cost $f_i$: using capacity from $i$ requires opening the facility and paying $f_i$. The problem then consists in minimizing the sum of the 
transportation cost and the opening costs.

Damci-Kurt et al. \cite{damcikurt13a} introduced the Transportation problem with Market Choice (TMC), which is also an \NPhard{} generalization of the transportation problem, where a penalty $r_j \in \Zp$ may be paid in exchange for not serving client $j$.
Other classical logistics problems have been studied with additional market choices (see e.g. \cite{geunes11a}).

Problem CFL and its variants have been the subject of an extensive study.
In this article we are mainly interested in approximation results (see e.g. \cite{damcikurt13a} and references therein for exact methods).
Let us first point out that the classical set covering problem \cite{garey79a} is a special case of the uncapacitated variant
with unitary opening costs as follows. 
Suppose -- without loss of generality -- that the set covering instance is feasible.
The given subsets of the universe are represented by facilities, the elements of the universe are clients with demand 1, and the transportation costs are 0 if an element belongs to a subset, and 2 otherwise.
(Since the opening costs are 1, an optimal solution to the instance of CFL never uses edges with cost 2, and thus this solution provides a minimum-size cover).
This directly implies that this variant as well as the more general CFL are strongly \NPhard{} and cannot be approximated within any factor better than $O(\log n)$ unless \Pclass=\NP{}  \cite{alon06a}.
For the uncapacitated variant there actually is an $O(\log n^2 m)$ approximation algorithm \cite{hochbaum82a}, and
for CFL there is an $O(\log n + \log(\max_{i \in F, j \in C}c_{ij}))$ approximation algorithm \cite{barilan01a}.

All constant-factor approximation results presented below consider CFL and its variants in the setting that
the facilities and the clients are points embedded in a common metric space, i.e. the distances between
pairs of points are nonnegative, symmetric and satisfy the following variant of the triangle inequality:
\begin{equation*}
 c_{i_0 j_0} \leq c_{i_0 j_1} + c_{i_1 j_1} + c_{i_1 j_0} 
\end{equation*}
for any two facilities $i_0, i_1 \in F$ and any two clients $j_0, j_1 \in C$, where $c_{ij}$ is the per unit transportation cost between facility $i$ and client $j$.

There is a 5-approximation algorithm for the metric CFL due to Bansal et al.~\cite{bansal12a}, which improves on the work of P\'{a}l et al., Mahdian and P\'{a}l, and Zhang et al.~\cite{pal01a, mahdian03a, zhang05a} (all four articles are based on local-search techniques).
The metric uncapacitated variant (UFL) \cite{cornuejols90a, shmoys00a, mahdian06a} has a $1.488$-approximation algorithm by Li \cite{li13a}, which is based on the work of Byrka and Aardal \cite{byrka10a} and Chudak \cite{chudak03a}, and cannot be approximated within 1.463 unless $\NP \subseteq \DTIME[n^{O(\log\log n)}]$ \cite{guha99a}.
The metric variant of CFL where capacities are uniform is \NPhard{} as well, and the best known approximation algorithm has a factor of 3 and uses local search \cite{aggarwal10a}.
The algorithm was initially given by Kuehn and Hamburger \cite{kuehn63a}, and Korupolu et al.~\cite{korupolu00a} provided the first analysis, while also considering other variants of the problem.
The analysis and the approximation factor was subsequently improved by Chudak and Williamson \cite{chudak05a}.
Non local-search based methods include the recent work of An et al. \cite{an14a}, where the authors formulate an LP relaxation of metric CFL that has a constant integrality gap, and derive an LP-rounding approximation algorithm with factor 288.
Some authors also consider \emph{soft} variants of CFL, where a facility can be opened multiple times.
Currently, the best approximation factor for this problem is 2 \cite{mahdian06a}.

In this article, we establish four polynomial-time reductions preserving approximation factor (see e.g. \cite{papadimitriou88a, ausiello06a}), from TMC to CFL and from CFL to TMC, both under the metric assumption and in the general case.
The reductions are similar in principle and rely on the closeness of the two problems as well as on a good choice of costs per unit of flow in the gadgets we introduce.
Using approximation algorithms established by Bansal et al.~\cite{bansal12a} and by Bar-Ilan et al. \cite{barilan01a}, for the metric case and the general case, respectively, we can then prove that there exists a $5$-approximation algorithms for TMC under the metric hypothesis, and an approximation algorithm with logarithmic factor in the general case.

These two results are given in Sections \ref{sec:5approx} and \ref{sec:logapprox}.
Section \ref{sec:converse} provides the reductions in the other direction, from CFL to TMC.
Finally, Section \ref{sec:uncap} briefly deals with two uncapacitated cases.

\section{Formal problem definition}\label{sec:prelim}
 Throughout the article, we suppose all data is integer and non-negative.
 Let us define problem TMC and CFL, by first giving the input common to both problems.
 Each facility $i \in F$ has a serving capacity $s_i$, i.e. it can send at most $s_i$ units of flow to possibly multiple clients.
 Each client $j \in C$ has a demand $d_j$, that can be satisfied by multiple facilities.
 The per unit transportation cost between facility $i$ and $j$ is denoted by $c_{ij}$.
 In the metric case, we have $c_{ij}=d(i,j)$.

 In TMC, we are additionally given a penalty cost $r_j$ for each client $j \in C$, which must be paid if client $j$ is not served.
 For each facility $i \in F$ and client $j \in C$, let $x_{ij}$ be the flow variable between $i$ and $j$, and for each $j\in C$, let $z_j$ be the binary variable such that $z_j=1$ if and only if client $j$ is not served.
 Problem TMC then consists in minimizing $\sum_{i\in F}\sum_{j\in C}c_{ij}x_{ij}+ \sum_{j \in C}r_jz_j$ such that the demand of each client $j\in C$ is served if $z_j=0$, each facility $i\in F$ sends at most $s_i$ units of flow, and $x$ is a nonnegative flow.

 In CFL, an opening cost $f_i$ is additionally given for each facility $i \in F$, which must be paid if the facility uses any unit of capacity.
 As for TMC, let $x$ denote the flow vector and let $z$ be the binary vector such that for each $i\in F$, $z_i=1$ if and only if facility $i$ is opened.
 Problem CFL then consists in minimizing $\sum_{i\in F}\sum_{j\in C}c_{ij}x_{ij}+ \sum_{i \in F}f_iz_i $ such that the demands of each client $j\in C$ is served, each facility $i\in F$ sends at most $s_i$ units of flow if $z_i=1$, no flow otherwise, and $x$ is a nonnegative flow.
 
 \section{An approximation algorithm with factor 5 for the metric TMC}\label{sec:5approx}
 Essentially, the following lemma shows that not serving a client $j$ in TMC is equivalent to opening a dedicated facility $i$ for this client in CFL, where the opening cost $f_i$ is the penalty cost $r_j$, and the capacity $s_i$ matches the demand $d_j$.
 \begin{lemma}\label{th:reduc-metric-TMC-CFL}
  There is a polynomial-time reduction preserving the approximation factor from TMC to CFL, where both problems are considered under the metric hypothesis.
 \end{lemma}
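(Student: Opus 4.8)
The plan is to make precise the intuition stated above. Given an instance of TMC, I would build an instance of CFL as follows: every original facility $i\in F$ keeps its capacity $s_i$ and is assigned opening cost $f_i=0$; every client $j\in C$ keeps its demand $d_j$; and for each client $j$ I add one new \emph{private} facility $i_j$ with capacity $s_{i_j}=d_j$ and opening cost $f_{i_j}=r_j$, located at the very same point of the metric space as $j$, so that $c_{i_j j}=0$ while $c_{i_j p}=d(j,p)$ for every other point $p$. On the enlarged point set the distances are still nonnegative and symmetric, and the variant triangle inequality holds because it holds in any (pseudo)metric space; the construction is clearly polynomial. The forward direction is then immediate: from a TMC solution $(x,z)$ of cost $T$, open each private facility $i_j$ with $z_j=1$ and route $d_j$ units from $i_j$ to $j$ (extra opening cost $\sum_{z_j=1} r_j$, transportation cost $0$), and keep the flow $x$ on the zero-cost facilities of $F$ for the served clients; this is feasible for the CFL instance and costs exactly $T$, so the optimum of that instance is at most that of the TMC instance.

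For the converse I would take any feasible CFL solution $(x,z)$ of cost $K$, fix its opening vector $z$, and consider a flow $x^*$ that respects $z$, has minimum transportation cost among all such flows, and, subject to that, minimizes the total amount $W$ of flow sent by private facilities to clients other than their own. The key claim -- the only place the metric hypothesis is genuinely used -- is that $W=0$ for such an $x^*$: if some $i_{j'}$ sent a positive amount to a client $j\neq j'$, then, $i_{j'}$ having capacity $d_{j'}$, the client $j'$ would receive a positive amount from some other facility $b$, and shifting a common positive quantity $\delta$ so as to route it from $i_{j'}$ to $j'$ and from $b$ to $j$ instead keeps the solution feasible (outflows of $i_{j'}$ and $b$ and inflows of $j$ and $j'$ are unchanged) while changing the cost by $\delta(c_{bj}-c_{i_{j'} j}-c_{bj'})\le 0$, using $c_{i_{j'} j'}=0$ and the variant triangle inequality applied to the facilities $\{b,i_{j'}\}$ and the clients $\{j,j'\}$; choosing $\delta$ maximal this does not increase the cost and, by a short case check, strictly decreases $W$, contradicting the choice of $x^*$.

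Granting the claim, let $S$ be the set of clients whose private facility is open under $z$; then in $x^*$ every client of $C\setminus S$ is served entirely from $F$. I would output the TMC solution that leaves the clients of $S$ unserved (penalty $\sum_{j\in S} r_j=\sum_{j\in S} f_{i_j}$, at most the opening cost of $(x,z)$) and serves $C\setminus S$ by a minimum-cost transportation computation using only the facilities of $F$; this is feasible, witnessed by the restriction of $x^*$ to $F$, and its cost is at most the transportation cost of $x^*$, hence at most that of $x$. Summing, the resulting TMC solution costs at most $K$, and producing it only requires reading off $S$ and solving one transportation problem, so the map is polynomial -- $x^*$ is needed only for the analysis. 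Hence the two optima coincide and an $\alpha$-approximate CFL solution on the constructed instance maps to an $\alpha$-approximate TMC solution, i.e.\ the reduction preserves the approximation factor; combined with the $5$-approximation for metric CFL of Bansal et al.~\cite{bansal12a} this gives a $5$-approximation for metric TMC. The main obstacle is precisely this converse direction: one must rule out private facilities serving the ``wrong'' client, which forces an exchange argument that rests essentially on the variant triangle inequality and must be arranged so that the cost does not increase.
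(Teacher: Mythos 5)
Your proposal is correct and follows essentially the same route as the paper: the same gadget (a dummy facility per client with opening cost $r_j$, capacity $d_j$, and zero cost to its own client), the same forward mapping, and the same flow-exchange argument via the variant triangle inequality to show dummy facilities need never serve other clients, finishing by reading off the unserved clients and re-solving a transportation problem. The only cosmetic differences are that you define the dummy's costs by co-locating it with its client (the paper uses $\min_{i_0\in F}(c_{i_0 j_0}+c_{i_0 j})$, which avoids needing client--client distances explicitly) and you phrase the exchange as an extremal minimum-cost, minimum-$W$ flow rather than an iterative rerouting.
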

\begin{proof}
 First, let us describe the polynomial-time reduction.
 Let an instance $I_1$ of TMC be given, and let us build an instance $I_2$ to CFL.
 Initialize $I_2$ with the same data as $I_1$, except that penalties $r_j$ of $I_1$ are not used in $I_2$ and opening costs $f_i$ of $I_2$ are all equal to $0$.
 Furthermore, for each client $j$, create a dummy facility $i$ with opening cost $f_i=r_j$, capacity $s_i=d_j$, serving cost $c_{ij}=0$, and for all clients $j_0\neq j$, set $c_{ij_{0}}=\min_{i_0 \in C\setminus\{i\}}(c_{i_0 j_0}+c_{i_0 j})$ (which satisfies the triangle inequality).
 This reduction runs in linear time in the size of $I_1$.
 Let $opt(I)$ be the optimal value of instance $I$, and let $obj(x,z)$ denote the objective value of a solution $(x,z)$.

 Second, let us prove that for any instance $I_1$, the resulting instance $I_2$ satisfies $opt(I_2)\leq opt(I_1)$.
 Let an optimal solution to $I_1$ be given.
 In the solution that we build for $I_2$, open all facilities with opening cost 0, and use the same flows as in the given optimal solution of $I_1$.
 As a result, some clients in $I_2$ may not be served, namely those for which a penalty is paid in $I_1$.
 For each client not yet served, open the corresponding dummy facility, and serve the client through it.
 The solution is feasible, and the total costs are equal.

 Third, let us prove that for any feasible solution $(x^2,z^2)$ to $I_2$, we can find a solution $(x^1,z^1)$ to $I_1$ in a time polynomial in the size of $I_1$, such that $obj(x^1,z^1)\leq obj(x^2,z^2)$.
 Let us build -- in polynomial time -- a solution $(\hat{x}^2,\hat{z}^2)$ to $I_2$ that satisfies $obj(\hat{x}^2,\hat{z}^2)\leq obj(x^2,z^2)$, and from which we will retrieve $(x^1,z^1)$.
 Open any free facility that is not opened in $(x^2,z^2)$, i.e. set $\hat{z}^2_i=1$ if $f_i=0$, $\hat{z}^2_i=z^2_i$ otherwise.

\begin{figure}
\centering
\begin{subfigure}{.4\columnwidth}
\centering
\includegraphics{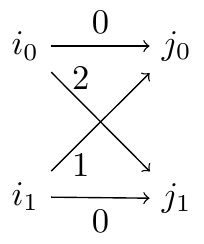}
\caption{Initial flow in which we want to increase the flow between $i_0$ and $j_0$.}
\end{subfigure}
\begin{subfigure}{.15\columnwidth}
\end{subfigure}
\begin{subfigure}{.4\columnwidth}
\centering
\includegraphics{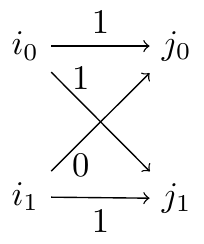}
\caption{Flow after increasing the flow  between $i_0$ and $j_0$ by $\bar{x}=1$.}
\end{subfigure}
\caption{Example with two facilities $i_0$ and $i_1$ and two clients $j_0$ and $j_1$. The amount of flow is indicated on the edges. In both cases, the capacity used by each facility and the demand provided to each client is the same.}
\label{fig:flowswap}
\end{figure}
 Furthermore, we can suppose without loss of generality that in the solution $(\hat{x}^2,\hat{z}^2)$, every dummy facility that is opened fully serves its corresponding client.
 Indeed, if a dummy facility $i_0$ is opened and does not fully serve client $j_0$, then there is at least one other facility, say $i_1$, serving $j_0$.
 If $i_0$ serves no other client, then fully serve client $j_0$ from $i_0$ and delete the flow between $j_0$ and any other facility, at no additional cost.
 If $i_0$ serves a client $j_1$, then let $\bar{x}=\min(\hat{x}^2_{i_0 j_1},\hat{x}^2_{i_1 j_0})$, and increase the flows $\hat{x}^2_{i_0 j_0}$ and $\hat{x}^2_{i_1 j_1}$ by $\bar{x}$ while decreasing the flows $\hat{x}^2_{i_0 j_1}$ and $\hat{x}^2_{i_1 j_0}$ by $\bar{x}$.
 Figure \ref{fig:flowswap} provides an example in this case.
 Following this operation, both facilities $i_0$ and $i_1$ use the same capacity as previously, client $j_0$ and $j_1$ receive the same amount of flow, which ensures feasibility.
 The cost decreases by $\bar{x}(c_{i_0 j_1} + c_{i_1 j_0} - c_{i_0 j_0} - c_{i_1 j_1})$, and we have $c_{i_0 j_0}=0$ on the one hand and $c_{i_1 j_1} \leq c_{i_1 j_0} + c_{i_0 j_1}$ by the triangular inequality on the other hand, thus the expression is non-negative.
 
 Finally, since we can suppose that each opened dummy facility fully serves the corresponding client, remove all dummy facilities and their corresponding clients from the instance and solve the transportation problem on the remainder of the problem.
 Note that for each client, the part of demand supplied by dummy facilities in $(\hat{x}^2,\hat{z}^2)$ is at least what it is in $(x^2,z^2)$,
 therefore on the remainder of the problem the transportation problem is feasible and the cost does not increase, and thus $obj(\hat{x}^2,\hat{z}^2)\le obj(x^2,z^2)$.
 It is now straightforward to build $(x^1,z^1)$ from $(\hat{x}^2,\hat{z}^2)$, and the two solutions have the same cost.
\end{proof}

\begin{theorem}\label{th:approx-TMC}
 There is a 5-approximation for the metric TMC.
\end{theorem}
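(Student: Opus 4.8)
The plan is to obtain the result by composing the reduction of Lemma~\ref{th:reduc-metric-TMC-CFL} with the known $5$-approximation algorithm for the metric CFL due to Bansal et al.~\cite{bansal12a}. Given an instance $I_1$ of the metric TMC, I would first apply the reduction of the lemma to produce, in linear time, an instance $I_2$ of CFL. By the construction described in the proof of the lemma, the transportation costs of the dummy facilities are defined precisely so that the triangle inequality still holds, so $I_2$ is a genuine metric CFL instance; moreover the second part of that proof gives $opt(I_2) \le opt(I_1)$ (and the third part, applied to an optimal solution of $I_2$, gives the reverse inequality, so in fact $opt(I_2) = opt(I_1)$, although only $opt(I_2)\le opt(I_1)$ is needed here).

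Next, I would run the algorithm of \cite{bansal12a} on $I_2$, which in polynomial time returns a feasible solution $(x^2,z^2)$ of $I_2$ with $obj(x^2,z^2) \le 5\, opt(I_2)$. Then I would invoke the third part of the proof of Lemma~\ref{th:reduc-metric-TMC-CFL}, which converts $(x^2,z^2)$, in time polynomial in the size of $I_1$, into a feasible solution $(x^1,z^1)$ of $I_1$ satisfying $obj(x^1,z^1) \le obj(x^2,z^2)$. Chaining the inequalities yields $obj(x^1,z^1) \le obj(x^2,z^2) \le 5\, opt(I_2) \le 5\, opt(I_1)$, and since the reduction, the CFL algorithm, and the back-translation of solutions all run in polynomial time, the composite procedure is a polynomial-time $5$-approximation for the metric TMC.

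There is no real obstacle: every ingredient is already in place, the reduction together with its two cost guarantees coming from Lemma~\ref{th:reduc-metric-TMC-CFL} and the $5$-approximation coming from \cite{bansal12a}. The only points deserving a line of care are verifying that $I_2$ inherits the metric property (which is exactly why the costs $c_{ij_0} = \min_{i_0 \in C\setminus\{i\}}(c_{i_0 j_0} + c_{i_0 j})$ are chosen as they are in the lemma's proof) and confirming that both directions of the reduction are polynomial so that the overall running time is polynomial.
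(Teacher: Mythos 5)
Your proposal is correct and follows exactly the route of the paper, which simply invokes Lemma~\ref{th:reduc-metric-TMC-CFL} together with the factor-$5$ algorithm of Bansal et al.~\cite{bansal12a}; you merely spell out the chaining $obj(x^1,z^1)\le obj(x^2,z^2)\le 5\,opt(I_2)\le 5\,opt(I_1)$ that the paper leaves implicit.
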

\begin{proof}
 It follows directly from Lemma \ref{th:reduc-metric-TMC-CFL} and from the existence of an approximation algorithm with factor 5 for CFL in the metric case (Theorem 1 of \cite{bansal12a}).
\end{proof}

Note that Lemma \ref{th:reduc-metric-TMC-CFL} can readily be adapted to reduce the Capacitated Facility Location problem with Market Choice (CFLMC) to CFL.
CFLMC is the generalization of both TMC and CFL, i.e. facilities have opening costs and clients have penalties.
It thus appears that CFLMC is not harder to solve as CFL.
The case of the uncapacitated variant is dealt with separately in Section \ref{sec:UFLMC}.

\section{A logarithmic approximation factor for TMC in the general case}\label{sec:logapprox}
It is possible to adapt Lemma \ref{th:reduc-metric-TMC-CFL} to the general, non-metric case.
\begin{lemma} \label{th:reduc-general-TMC-CFL}
 There is a polynomial-time reduction preserving the approximation factor from TMC to CFL, where both problems are considered in the general case.
\end{lemma}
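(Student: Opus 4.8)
The plan is to reuse the reduction of Lemma~\ref{th:reduc-metric-TMC-CFL} almost verbatim, replacing only the metric-respecting choice of transportation costs for the dummy facilities by a prohibitively large cost. Given an instance $I_1$ of TMC, I would build an instance $I_2$ of CFL as follows: keep all the original data, set the opening cost of every original facility to $0$, drop the penalties, and for each client $j$ create a dummy facility $i_j$ with $f_{i_j}=r_j$, $s_{i_j}=d_j$, and $c_{i_j j}=0$; but now, for every other client $j_0\neq j$, set $c_{i_j j_0}=M$ where $M:=1+\sum_{j\in C}r_j$. This is clearly a polynomial-time construction, $M$ is polynomially bounded, and $I_2$ is a feasible CFL instance since opening all dummy facilities serves every client.

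The inequality $opt(I_2)\le opt(I_1)$ is then obtained exactly as in Lemma~\ref{th:reduc-metric-TMC-CFL}: from an optimal solution of $I_1$, open all (zero-cost) original facilities, keep the original flow, and for each client whose penalty is paid, open its dummy facility and serve that client through it at cost $0$. The objective is unchanged and no $M$-edge is used.

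For the converse direction I would show that from any feasible solution $(x^2,z^2)$ of $I_2$ one can compute, in polynomial time, a feasible solution $(x^1,z^1)$ of $I_1$ with $obj(x^1,z^1)\le obj(x^2,z^2)$. First I may assume $(x^2,z^2)$ is integral: fixing the open facilities — including, without loss of generality, all zero-cost ones — and re-solving the resulting transportation problem to integral optimality does not increase the objective. Then there are two cases. If some $M$-edge carries flow, it carries at least one unit, so $obj(x^2,z^2)\ge M>\sum_{j\in C}r_j$, and the trivial solution of $I_1$ that serves nobody and pays all the penalties already has cost $\sum_{j\in C}r_j<obj(x^2,z^2)$; I output it. Otherwise every opened dummy facility $i_j$ serves only client $j$, and I proceed as in the metric proof — but now the only rerouting needed is the trivial sub-case (fully serve a client from its dummy and delete the flow it receives from other facilities), which does not use the triangle inequality. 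After this, removing the dummy facilities and the clients they serve leaves a feasible transportation instance on the original facilities whose optimal integral cost is at most the flow cost carried by original edges in the current solution; taking that flow as $x^1$ and setting $z^1_j=1$ exactly for the clients served by dummies yields a feasible $(x^1,z^1)$ whose cost is the original-edge flow cost plus the sum of the corresponding opening costs $r_j$, hence at most $obj(x^2,z^2)$.

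I expect the only genuine point of divergence from the metric case — and the main obstacle — to be the first case: the metric argument eliminates "misrouted" dummy facilities through cost-preserving flow swaps whose validity rests entirely on the triangle inequality, and in the general setting these swaps may increase the cost. The fix is to make misrouting so expensive that any solution exhibiting it is dominated by the all-penalties solution of $I_1$, and the only care required is the bookkeeping around $M$, namely that it is polynomially bounded and strictly exceeds $\sum_{j\in C}r_j$. Everything else is a routine transcription of the proof of Lemma~\ref{th:reduc-metric-TMC-CFL}.
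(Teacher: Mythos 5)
Your proof is correct, but it takes a genuinely different route from the paper's. The paper keeps the metric reduction intact except that it sets the cost from a dummy facility to every non-corresponding client equal to the \emph{maximum unit cost of $I_1$}; this single choice makes the original flow-swap argument survive without the triangle inequality, since in the swap the removed dummy edge $c_{i_0 j_1}$ is by construction at least the added edge $c_{i_1 j_1}$, so $c_{i_0 j_1}+c_{i_1 j_0}-c_{i_0 j_0}-c_{i_1 j_1}\geq 0$. You instead use a big-$M$ cost ($M=1+\sum_j r_j$) and a case analysis: if no $M$-edge carries flow, only the trivial rerouting is needed; if one does, the solution costs more than the all-penalties solution of $I_1$, which you output. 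This works, but it needs the extra step of passing to an integral re-solved transportation solution to turn ``some $M$-edge carries flow'' into ``cost at least $M$'' (a fractional trickle would not suffice), whereas the paper's choice avoids any threshold argument. Two further remarks: your phrase ``$M$ is polynomially bounded'' should be ``of polynomial encoding length'' (its value need not be polynomial in $n$, which is all a polynomial-time reduction requires); more substantively, the paper's choice deliberately preserves the maximum unit cost of the instance, which is what lets the subsequent theorem invoke the Bar-Ilan et al.\ algorithm to get the factor $O(\log n + \log(\max_{i,j}c_{ij}))$ verbatim, while your $M$ can exceed $\max_{i,j}c_{ij}$, so the downstream bound would degrade to one involving $\log(\sum_j r_j)$; the lemma as stated is still proved, but the paper's cost choice buys the sharper corollary.
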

\begin{proof}
 The proof is similar to the proof of Lemma \ref{th:reduc-metric-TMC-CFL}.
 The only difference in the first paragraph of the proof is as follows: when adding a dummy facility $i$ for a client $j$, set $c_{ij}$ to 0, but set the costs from $i$ to other clients to the maximum unit cost of instance $I_1$.
 The second paragraph is valid without modification.
 For the third paragraph, nothing changes except the cost decrease analysis: to prove $\bar{x}(c_{i_0 j_1} + c_{i_1 j_0} - c_{i_0 j_0} - c_{i_1 j_1})$ is non-negative, we observe that $c_{i_0 j_0}=0$, and that $c_{i_0 j_1} + c_{i_1 j_0} - c_{i_1 j_1} \geq c_{i_0 j_1} - c_{i_1 j_1} \geq 0$ by the choice of unit cost of the dummy facility $i_0$.
\end{proof}

\begin{theorem}
 There is a $O(\log n + \log(\max_{i \in F, j \in C}c_{ij}))$ approximation algorithm for TMC in the general case.
\end{theorem}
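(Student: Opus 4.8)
The plan is to derive the theorem from Lemma \ref{th:reduc-general-TMC-CFL} in exactly the way Theorem \ref{th:approx-TMC} is derived from Lemma \ref{th:reduc-metric-TMC-CFL}, but now invoking the general-case approximation algorithm for CFL of Bar-Ilan et al.~\cite{barilan01a} in place of the metric one. Concretely, given a TMC instance $I_1$, I would apply the reduction of Lemma \ref{th:reduc-general-TMC-CFL} to obtain in polynomial time a CFL instance $I_2$ with $opt(I_2)\le opt(I_1)$; run the $O(\log n + \log(\max_{i \in F, j \in C} c_{ij}))$-approximation algorithm of \cite{barilan01a} on $I_2$; and then use the polynomial-time procedure from the proof of Lemma \ref{th:reduc-general-TMC-CFL} to convert the returned CFL solution into a TMC solution for $I_1$ of no greater cost. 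Chaining the inequalities, the cost of this TMC solution is at most $O(\log n + \log(\max_{i \in F, j \in C} c_{ij}))$ times $opt(I_2)$, hence at most that factor times $opt(I_1)$.

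The only point that really requires attention is that the quantities appearing in the approximation factor are not inflated by the reduction. Here the reduction introduces one dummy facility per client but no new client, so the client count $n$ is unchanged; and every unit cost it creates equals the largest unit cost already present in $I_1$, so $\max_{i \in F, j \in C} c_{ij}$ is unchanged as well. Therefore the bound obtained for $I_2$ is literally the bound claimed for $I_1$, and no loss is incurred. (Should one prefer a version of the Bar-Ilan et al.~guarantee that also involves the number of facilities, note that this number grows only from $m$ to $m+n$, which is harmless inside the logarithm.)

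Beyond this bookkeeping there is no substantive obstacle: the content of the theorem is carried entirely by Lemma \ref{th:reduc-general-TMC-CFL}, and this final step is a one-line composition of an approximation-preserving reduction with an off-the-shelf algorithm, together with the observation that the instance parameters entering the factor are preserved.
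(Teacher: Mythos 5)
Your proposal is correct and matches the paper's own argument: apply the reduction of Lemma \ref{th:reduc-general-TMC-CFL}, invoke the $O(\log n + \log(\max_{i \in F, j \in C}c_{ij}))$ algorithm of \cite{barilan01a} for CFL, and observe that the reduction adds no clients and no unit cost exceeding the original maximum, so the factor carries over unchanged.
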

\begin{proof}
Corollary 5.5 of \cite{barilan01a} provides an approximation algorithm with ratio $O(\log n + \log(\max_{i \in F, j \in C}c_{ij}))$ for CFL.
In the reduction of Lemma \ref{th:reduc-general-TMC-CFL}, the maximum cost per unit of flow is preserved, and the number of clients $n$ is unchanged.
\end{proof}

\section{Preserving ratio from CFL to TMC}\label{sec:converse}
We establish reductions from CFL to TMC in the metric and general case.
In a very similar fashion, they rely on the addition of a dummy client for each facility, for which the penalty needs to be payed (in TMC) for the facility to be opened (in CFL).

We will assume throughout the remainder of this section that only feasible instances of CFL are considered.
This is not restrictive, as infeasible instances of CFL can be detected in linear time by checking if the total demand exceeds the total supply.
Note that TMC instances are feasible, as it is always possible to pay all client penalties.

\subsection{Instance upper bound}\label{sec:ub}
 We define an instance upper bound (\ub) of a given problem to be a strict upper bound on the objective value of \emph{any} feasible solution.
 
 An \ub{} of CFL can for example be obtained by adding together all facility opening costs as well as the optimal value of the maximization version of the transportation problem with every facility being opened.
 Finally, add 1 for this to be a strict upper bound.
 
\subsection{The metric case}
\begin{lemma}\label{th:reduc-metric-CFL-TMC}
 There is a polynomial-time reduction preserving the approximation factor from CFL to TMC, where both problems are considered under the metric hypothesis.
\end{lemma}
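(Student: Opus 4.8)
The plan is to mirror the reduction of Lemma~\ref{th:reduc-metric-TMC-CFL}, but with the roles of facilities and clients exchanged: we encode the opening cost $f_i$ of a facility $i$ of CFL as the penalty of a dedicated \emph{dummy client} in TMC. Given a feasible instance $I_1$ of CFL, I would build an instance $I_2$ of TMC as follows. Keep $F$, $C$, the capacities $s_i$, the demands $d_j$, and the costs $c_{ij}$ between original facilities and original clients unchanged; give every original client $j$ the penalty $r_j := \ubof{I_1}$, using the instance upper bound of Section~\ref{sec:ub}, so that no relevant solution of $I_2$ will leave an original client unserved. For each facility $i$, add a dummy client $j_i$ with demand $d_{j_i} := s_i$, penalty $r_{j_i} := f_i$, cost $c_{i j_i} := 0$, and cost $c_{i_0 j_i} := \min_{j_0 \in C}(c_{i_0 j_0} + c_{i j_0})$ (a shortest detour through a real client) for every other facility $i_0 \neq i$. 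The intended correspondence is: paying the penalty of $j_i$ in $I_2$ means opening $i$ in $I_1$, whereas not paying it forces $i$ to spend all of its capacity $s_i$ on $j_i$, i.e. to stay closed.

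First I would check that $I_2$ is metric. The only new inequalities involve at least one dummy client, and after unfolding the definition of $c_{i_0 j_i}$ each of them reduces to a single application of the triangle inequality of $I_1$; the case of two dummy clients follows by chaining two such inequalities. This is a short case analysis on which of the two facilities involved owns a dummy client. Second, $opt(I_2) \le opt(I_1)$: from an optimal solution of $I_1$, keep the flow serving the original clients, pay $r_{j_i} = f_i$ for each opened facility $i$, and for each closed facility $i$ serve $j_i$ entirely from $i$ at cost $0$; the total cost is unchanged.

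The main work is the converse direction: from any feasible solution $(x^2, z^2)$ of $I_2$, produce in polynomial time a feasible solution of $I_1$ of no larger cost. If $(x^2, z^2)$ leaves some original client unserved, its cost is at least $\ubof{I_1}$, which strictly exceeds the cost of any feasible solution of $I_1$, so we may output any feasible solution of $I_1$ (e.g. open all facilities and solve the resulting transportation problem, feasible since $I_1$ is). Otherwise I would normalise $(x^2, z^2)$ by the flow-swap argument of Lemma~\ref{th:reduc-metric-TMC-CFL}: whenever a dummy client $j_i$ whose penalty is \emph{not} paid is partly served by some $i_0 \neq i$, reroute flow between $i$ and $i_0$ (swapping their assignments to $j_i$ and to whichever client $i$ currently serves, or, if $i$ has spare capacity, simply moving $j_i$'s flow onto $i$); using $c_{i j_i} = 0$ and $c_{i_0 j_i} \le c_{i_0 j_0} + c_{i j_0}$ together with the triangle inequality of $I_1$, one shows the cost does not increase while $x_{i j_i}$ strictly increases, so the process terminates. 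After normalisation, each facility whose dummy penalty is not paid is saturated by that dummy client alone, so the original clients are served exclusively by the facilities whose dummy penalty \emph{is} paid. Declaring exactly those facilities open yields a feasible CFL solution whose opening cost equals the total dummy penalty of $(x^2, z^2)$ and whose transportation cost equals that incurred by the original clients in $(x^2, z^2)$; hence its objective value is at most $obj(x^2, z^2)$. As in Lemma~\ref{th:reduc-metric-TMC-CFL}, combining $opt(I_2) \le opt(I_1)$ with this transformation shows the reduction preserves the approximation factor.

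I expect the flow-swap normalisation — forcing every unpaid dummy client to be served only by its own facility while controlling the cost and arguing termination — to be the only delicate point; the metric verification and the two inequalities between optimal values are straightforward adaptations of the argument for Lemma~\ref{th:reduc-metric-TMC-CFL}.
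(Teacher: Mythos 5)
Your proposal is correct and follows essentially the same route as the paper's proof: dummy clients with penalty $f_i$ and demand $s_i$, penalties $\ubof{I_1}$ on original clients, the $obj\geq\ubof{I_1}$ fallback to the all-facilities-open transportation solution, and the flow-swap normalisation forcing each unpaid dummy to be served solely by its own facility. You even make explicit some details the paper leaves implicit (the dummy demand $d_{j_i}=s_i$ and the detour costs $c_{i_0 j_i}$), so no gaps to report.
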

\begin{proof}
 Let an instance $I_1$ of CFL be given, and let us build an instance $I_2$ to TMC.
 The reduction is similar to the one in the proof of Lemma \ref{th:reduc-metric-TMC-CFL}, and we will only indicate changes in each paragraph.
 In the first paragraph, create instance $I_2$ by adding a dummy client $j$ for each facility $i$, such that the penalty for not serving the dummy client is the opening cost of the  facility (i.e. $r_j=f_i$), instead of the opposite.
 All data are initialized similarly, except that non-dummy clients have a penalty of \ubof{I_1} (see Section \ref{sec:ub}), instead of non-dummy facilities having an opening cost of 0.
 
 The second paragraph is the exact converse.
 Let an optimal solution $I_2$ be given for TMC; in the solution $I_1$ that we build for CFL, we open a facility if and only if in $I_2$, the corresponding dummy client is not served.
 The sum of the facility opening costs in $I_1$ is thus equal to the sum of the penalty costs in $I_2$.
 The flows between facilities and non-dummy clients in $I_2$ can then be used between opened facilities and clients in $I_1$, and therefore $opt(I_1)\leq opt(I_2)$.
 
 Before the third paragraph, we firstly need to make sure that no penalty is paid for non-dummy clients (in the original reduction, it sufficed to open every free facility).
 If the given solution is such that  $obj(x^2,z^2)\geq \ubof{I_1}$, then we are in the case where a penalty is paid for a non-dummy client.
 Replace the solution $(x^2,z^2)$ by the one  given by the transportation problem on the instance $I_2$ where all non-dummy clients are served and all dummy clients are not served.
 Since this corresponds to a feasible solution of $I_1$ (where all facilities are opened), its cost is strictly less than \ubof{I_1}.
 The rest of the proof is then direct, using the triangle inequality to show that a dummy client can be served entirely by its corresponding facility at no additional cost.
\end{proof}

\subsection{The general case}
\begin{lemma}\label{th:reduc-general-CFL-TMC}
 There is a polynomial-time reduction preserving the approximation factor from CFL to TMC, where both problems are considered in the general (i.e. non-metric) case.
\end{lemma}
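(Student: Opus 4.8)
The plan is to rerun the proof of Lemma~\ref{th:reduc-metric-CFL-TMC} with a single change, namely the one Lemma~\ref{th:reduc-general-TMC-CFL} made to Lemma~\ref{th:reduc-metric-TMC-CFL}: in place of the cost that the triangle inequality dictated, use the largest unit cost $M=\max_{i\in F,\,j\in C}c_{ij}$ of the CFL instance $I_1$. Concretely, I would build the TMC instance $I_2$ exactly as in Lemma~\ref{th:reduc-metric-CFL-TMC} --- for each facility $i$ a dummy client $j$ with demand $d_j=s_i$, penalty $r_j=f_i$ and $c_{ij}=0$, and penalty $\ubof{I_1}$ on every original client --- except that for a facility $i_0\neq i$ the cost $c_{i_0 j}$ is now set to $M$. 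This is still a linear-time construction.

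Two of the three parts of the argument carry over untouched. The inequality $opt(I_2)\le opt(I_1)$ that factor preservation requires is metric-free: from a CFL solution with open set $O\subseteq F$ one obtains a TMC solution of the same cost by leaving the dummy client of each $i\in O$ unserved (which costs $r_j=f_i$) and serving the dummy client of each $i\notin O$ entirely through $i$ at zero cost --- possible since $d_j=s_i$ and $i$ carries no other flow --- while routing the original clients as before. The preliminary step that eliminates a paid original penalty is also unchanged: if the current solution has value at least $\ubof{I_1}$, replace it by the transportation solution serving all original clients and no dummy client, whose cost is $\sum_{i\in F}f_i$ plus a feasible transportation cost and hence strictly below $\ubof{I_1}$ by the definition of the instance upper bound.

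The only metric-dependent step is the flow swap proving that a served dummy client may be assumed to be served solely by its own facility, and I would redo it as in Lemma~\ref{th:reduc-general-TMC-CFL} with facilities and clients interchanged. If a dummy client $j$ with corresponding facility $i$ gets flow from some $i_0\neq i$ while $i$ sends flow to some client $j_1\neq j$, then moving $\bar x=\min(x_{i_0 j},x_{i j_1})$ units off the edges $(i_0,j)$ and $(i,j_1)$ onto $(i,j)$ and $(i_0,j_1)$ leaves every capacity and every demand intact and decreases the cost by $\bar x\,(c_{i_0 j}+c_{i j_1}-c_{ij}-c_{i_0 j_1})$, which is nonnegative since $c_{ij}=0$ and $c_{i_0 j}=M\ge c_{i_0 j_1}$; the degenerate case in which $i$ sends no other flow is handled exactly as in Lemma~\ref{th:reduc-metric-TMC-CFL} by rerouting all of $j$'s demand onto $i$ at zero cost. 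Once every served dummy client is served only by its own facility, deleting those dummy clients and the facilities they saturate and solving the residual transportation problem yields the required CFL solution of no larger cost.

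I expect the only genuine difficulty to be checking the edge cases behind the phrase ``a served dummy client may be assumed to be served solely by its own facility'': that the swap estimate still holds when $j_1$ is itself a dummy client --- it does, because then $c_{i_0 j_1}\in\{0,M\}$, which is at most $M$ --- and that contracting out a saturated facility never strands the demand of an original client, which is precisely what the normalisation guarantees. Everything else is a verbatim repeat of Lemma~\ref{th:reduc-metric-CFL-TMC}.
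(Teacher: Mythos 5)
Your proposal is correct and is exactly the paper's intended argument: the paper's proof of Lemma~\ref{th:reduc-general-CFL-TMC} is a one-line instruction to redo Lemma~\ref{th:reduc-metric-CFL-TMC} with the dummy-edge costs set to the maximum unit cost as in Lemma~\ref{th:reduc-general-TMC-CFL}, which is precisely the construction and swap analysis you carry out. Your write-up simply makes explicit the details (the $opt(I_2)\le opt(I_1)$ direction, the penalty-elimination step, and the cost inequality $c_{i_0 j}=M\ge c_{i_0 j_1}$) that the paper leaves implicit.
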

\begin{proof}
 The proof is based on the proof of Lemma \ref{th:reduc-metric-CFL-TMC}, using the unit costs between dummy clients and facilities as in the proof of Lemma \ref{th:reduc-general-TMC-CFL}.
\end{proof}

\section{Uncapacitated variants}\label{sec:uncap}
We prove that the Uncapacitated Transportation problem with Market Choice (UTMC) has an approximation algorithm with ratio $1.488$ and we discuss (not being able to) extending these results to the Uncapacitated Facility Location problem with Market Choice (UFLMC).

\subsection{Approximation of the Metric Uncapacitated Transportation problem with Market Choice}
UTMC is a special case of TMC where each capacity is greater or equal than the total demand.
UTMC thus also reduces to CFL and can be approximated with a factor 5 (Theorem \ref{th:approx-TMC}).
We can however easily adapt Lemma \ref{th:reduc-metric-TMC-CFL} to the case where both problems are uncapacitated, which yields a better approximation ratio.
\begin{lemma}\label{th:reduc-metric-UTMC-UFL}
 There is a polynomial-time reduction preserving the approximation factor from UTMC to UFL, where both problems are considered under the metric hypothesis.
\end{lemma}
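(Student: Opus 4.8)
The plan is to mirror the three-part structure of the proof of Lemma~\ref{th:reduc-metric-TMC-CFL}, simplifying wherever the absence of capacities allows. First I would describe the reduction. Given an instance $I_1$ of UTMC on facilities $F$ and clients $C$, build an instance $I_2$ of UFL by keeping the same facilities, clients and unit costs, setting the opening cost of every original facility to $0$, and adding for each client $j \in C$ a dummy facility $i_j$ with opening cost $f_{i_j} = r_j$, unit cost $c_{i_j j} = 0$ to its own client, and $c_{i_j j_0} = \min_{i_0 \in F}(c_{i_0 j_0} + c_{i_0 j})$ for every other client $j_0$, so that the metric property is preserved in exactly the way it is in Lemma~\ref{th:reduc-metric-TMC-CFL}. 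This construction runs in linear time in the size of $I_1$.

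Second, I would show $opt(I_2) \le opt(I_1)$: starting from an optimal UTMC solution, open every original (free) facility, copy its flows, and for each client $j$ whose penalty is paid in $I_1$, open the dummy facility $i_j$ and route all of $d_j$ through it at zero transportation cost. The resulting UFL solution is feasible and has the same objective value.

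Third — the direction carrying the approximation guarantee — I would show that from any feasible UFL solution $(x^2,z^2)$ of $I_2$ one builds, in time polynomial in the size of $I_1$, a feasible UTMC solution $(x^1,z^1)$ with $obj(x^1,z^1) \le obj(x^2,z^2)$. Here the uncapacitated setting makes the flow-swapping argument of Lemma~\ref{th:reduc-metric-TMC-CFL} unnecessary: after opening every free original facility, for each open dummy facility $i_j$ I reroute \emph{all} of client $j$'s demand to come from $i_j$ (possible since $i_j$ is uncapacitated) and delete any flow reaching $j$ from other facilities; because $c_{i_j j}=0$ and all costs are non-negative, this does not increase the objective. Now every open dummy facility fully serves its client, so I delete all dummy facilities together with their clients and solve an ordinary transportation problem on the remaining (all open, uncapacitated) facilities and clients; its cost does not increase. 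Declaring the clients served by dummies to be exactly those whose penalty is paid, and using the transportation flows for the rest, produces $(x^1,z^1)$ with the required bound, and an $(x^1,z^1)$-preserving argument as in Lemma~\ref{th:reduc-metric-TMC-CFL} completes the reduction.

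All steps are routine once the reduction is fixed; the only point needing a word of care is feasibility of the residual transportation problem in the third part, and this is immediate in the uncapacitated case precisely because every original facility is open and there are no capacity constraints, so every remaining client's demand can be met from the original facilities alone. Combined with Li's $1.488$-approximation for metric UFL~\cite{li13a}, the lemma yields the claimed $1.488$-approximation for metric UTMC.
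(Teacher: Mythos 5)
Your reduction and the first two parts coincide with the paper's, and the overall plan for the third part is sound; however, the one claim you pass over quickly --- that after deleting the dummy facilities and their clients the residual transportation problem can be re-solved ``at no increase in cost'' --- is exactly the step that needs an argument, and it is not the feasibility issue you single out as the delicate point. After your rerouting step, an open dummy facility $i_j$ may still be sending flow to \emph{other} clients $j_1 \neq j$: it is uncapacitated, so fully serving $j$ does not prevent this (unlike the capacitated situation of Lemma~\ref{th:reduc-metric-TMC-CFL}, where $s_{i_j}=d_j$ forces a dummy that fully serves its client to serve nothing else). When you then delete all dummies, the remaining clients are in general only partially served by the surviving flows, so the restriction of your modified solution is \emph{not} a feasible solution of the residual transportation problem, and you cannot conclude the cost comparison by simply comparing the optimum against it.

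The bound is nevertheless true, but only because of the choice of dummy costs: for the dummy $i_{j_0}$ of client $j_0$ and any other client $j_1$ one has $c_{i_{j_0} j_1} = \min_{i \in F}\bigl(c_{i j_1} + c_{i j_0}\bigr) \geq \min_{i \in F} c_{i j_1}$, so every unit that $j_1$ received from a foreign dummy can instead be sent from some original facility (free, open, uncapacitated) at no greater per-unit cost; hence a feasible residual solution of cost at most the transportation cost of your modified UFL solution exists, and the residual optimum is no larger. This is precisely the observation the paper's proof makes explicitly --- it reroutes any dummy-to-foreign-client flow to the minimizing non-dummy facility before reading off the UTMC solution, and then needs no re-solve at all. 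With that one line added, your argument closes and is essentially the paper's; feasibility of the residual problem, which you flag as the only point needing care, is indeed immediate.
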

\begin{proof}
 The proof is similar to the proof of Lemma \ref{th:reduc-metric-TMC-CFL}.
 We will point out the differences in each paragraph.
 
 In the first paragraph, we do not set a capacity limit for the dummy facilities.
 The second paragraph is identical.
 In the third paragraph, we can build $(\hat{x}^2,\hat{z}^2)$ such that each dummy facility $i_0$ fully serves its corresponding client $j_0$, because it has unlimited capacity.
 We can additionally make sure that no other client $j_1$ is served by $i_0$ by construction of the cost $c_{i_0 j_1}$: there exists a non-dummy facility $i_1$ at distance at most $c_{i_0 j_1}$ from $j_1$.
 Therefore, if $i_0$ sends a (non-zero) flow to $j_1$, set this flow to $0$ and send it instead from $i_1$, at no additional cost.
\end{proof}

\begin{theorem}
 There is a $1.488$-approximation for the metric UTMC.
\end{theorem}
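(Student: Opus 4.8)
The plan is to combine Lemma~\ref{th:reduc-metric-UTMC-UFL} with the best known approximation result for the metric uncapacitated facility location problem. First I would recall that Li~\cite{li13a}, building on Byrka and Aardal~\cite{byrka10a} and Chudak~\cite{chudak03a}, gives a polynomial-time $1.488$-approximation algorithm for metric UFL. Since Lemma~\ref{th:reduc-metric-UTMC-UFL} provides a polynomial-time reduction from metric UTMC to metric UFL that preserves the approximation factor, composing this reduction with Li's algorithm immediately yields a polynomial-time $1.488$-approximation for metric UTMC.

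In more detail, given a metric UTMC instance $I_1$, the algorithm would: (i) construct in polynomial time the corresponding metric UFL instance $I_2$ as in the proof of Lemma~\ref{th:reduc-metric-UTMC-UFL} --- note the lemma's construction already inherits the triangle-inequality check from the proof of Lemma~\ref{th:reduc-metric-TMC-CFL}, so $I_2$ is a genuine metric instance to which Li's algorithm applies; (ii) run Li's $1.488$-approximation on $I_2$ to obtain a feasible solution $(x^2,z^2)$ with $obj(x^2,z^2) \le 1.488 \cdot opt(I_2)$; (iii) apply the backward transformation from the proof of Lemma~\ref{th:reduc-metric-UTMC-UFL} to recover, in time polynomial in the size of $I_1$, a feasible UTMC solution $(x^1,z^1)$ with $obj(x^1,z^1) \le obj(x^2,z^2)$; and (iv) invoke the inequality $opt(I_2) \le opt(I_1)$ established in the lemma. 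Chaining these bounds gives $obj(x^1,z^1) \le 1.488 \cdot opt(I_1)$, as desired.

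There is no real obstacle here beyond assembling these two ingredients: the statement "$opt(I_2) \le opt(I_1)$" together with "every feasible $I_2$-solution maps back to an $I_1$-solution of no greater cost, computable in polynomial time" is exactly the definition of an approximation-factor-preserving reduction, and both directions are already supplied by Lemma~\ref{th:reduc-metric-UTMC-UFL}. One could alternatively note that the weaker factor $5$ is immediate from Theorem~\ref{th:approx-TMC} since UTMC is a special case of TMC; the whole point of isolating the uncapacitated-to-uncapacitated reduction in Lemma~\ref{th:reduc-metric-UTMC-UFL} is that it lets us substitute the much stronger UFL guarantee of Li in place of the CFL guarantee of Bansal et al., thereby improving $5$ to $1.488$.
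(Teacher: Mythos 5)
Your proposal is correct and follows exactly the paper's own argument: compose the approximation-factor-preserving reduction of Lemma~\ref{th:reduc-metric-UTMC-UFL} with Li's $1.488$-approximation algorithm for metric UFL (Theorem~1 of \cite{li13a}). The extra detail you give (chaining $obj(x^1,z^1)\le obj(x^2,z^2)\le 1.488\cdot opt(I_2)\le 1.488\cdot opt(I_1)$) is just an unpacking of what the paper states in one line.
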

\begin{proof}
 The result follows directly from Lemma \ref{th:reduc-metric-UTMC-UFL} and from the existence of a $1.488$-approximation algorithm for UFL in the metric case (Theorem 1 of \cite{li13a}).
\end{proof}

\subsection{The Metric Uncapacitated Facility Location problem with Market Choice}\label{sec:UFLMC}
UFLMC is also known as the Uncapacitated Facility Location with (Linear) Penalties.
In this variant, both opening costs for facilities and penalties for clients that are not served are considered.
The best approximation algorithm for this problem is due to Li et al \cite{li13b}, and has a performance ratio of 1.5148.

Since the currently best-known approximation ratio (1.488 \cite{li13a}) for UFL is better than the one for UFLMC, we have tried reducing UFLMC to UFL.
By Lemma \ref{th:reduc-metric-TMC-CFL} we know this reduction works when both problems are capacitated.
However, without capacities on dummy vertices (as in Lemma \ref{th:reduc-metric-TMC-CFL}) or non-dummy facilities that do not have an opening cost (as in Lemma \ref{th:reduc-metric-UTMC-UFL}), the approximation-preserving reductions used throughout this paper do not seem to carry over to this variant.

\section*{Acknowledgements}
The authors would like to thank Marco Molinaro and George Nemhauser for inviting Karen Aardal to the Industrial and Systems Engineering department of the Georgia Institute of Technology and making this work possible.
Pierre Le Bodic's research was funded by AFOSR grant FA9550-12-1-0151 of the Air Force Office of Scientific Research and the National Science Foundation Grant CCF-1415460 to the Georgia Institute of Technology.

\bibliography{tmc}
\bibliographystyle{plain}
\end{document}